\DeclareMathOperator{\Exp}{{\Bbb E}}
\numberwithin{equation}{section}
\newtheorem{theorem}{Theorem}
\newtheorem{lemma}[theorem]{Lemma}
\newtheorem{conjecture}{Conjecture}
\newtheorem{corollary}[theorem]{Corollary}
\newtheorem{fact}{Fact}
\newcommand{\one}{\mathds{1}}
\renewcommand{\vec}[1]{\mathbf{#1}}
\newcommand{\remove}[1]{}
\newcommand{\C}{\mathbb{C}}
\newtheorem{prop}{Proposition}
\newcommand{\ket}[1]{\left| #1 \right\rangle}
\newcommand{\bra}[1]{\left\langle #1 \right|}
\renewcommand{\outer}[2]{\ket{#1} \!\bra{#2}}
\newcommand{\inner}[2]{\left\langle #1 \!\mid\! #2 \right\rangle}
\newcommand{\innerm}[3]{\left\langle #1 \!\mid\! #2 \!\mid\! #3 \right\rangle}
\DeclareMathOperator{\tr}{{\rm tr}}
\DeclareMathOperator{\rank}{{\rm rk}}
\newcommand{\abs}[1]{\left| #1 \right|}
\newcommand{\poly}{{\rm poly}}
\newcommand{\vsat}{V_{\rm sat}}
\newcommand{\rsat}{R_{\rm sat}}
\newcommand{\radv}{\rsat^{\rm adv}}
\newcommand{\rgen}{\rsat^{\rm gen}}
\newcommand{\aadv}{\alpha^{\rm adv}}
\newcommand{\dmax}{{d_{\max}}}
\newcommand{\e}{{\rm e}}
\newcommand{\dt}{{\rm d}t}
\newcommand{\db}{{\rm d}b}
\newcommand{\dx}{{\rm d}x}
\newcommand{\dmu}{{\rm d}\mu}
\newcommand{\dnu}{{\rm d}\nu}
\newcommand{\dtau}{{\rm d}\tau}
\newcommand{\QMA}{{\rm QMA}}
\newcommand{\NP}{{\rm NP}}
\newcommand{\numP}{{\rm \#P}}
\renewcommand{\P}{{\rm P}}
\newcommand{\be}{\begin{equation}}
\newcommand{\ee}{\end{equation}}
\newcommand{\ba}{\begin{array}}
\newcommand{\ea}{\end{array}}
\newcommand{\bea}{\begin{eqnarray}}
\newcommand{\eea}{\end{eqnarray}}
\newcommand{\calH}{{\cal H }}
\newcommand{\calS}{{\cal S }}
\begin{document}

\title{Bounds on the quantum satisfiability threshold}
\author{Sergey Bravyi\footnote{IBM Watson Research Center, Yorktown Heights NY 10594.}
\and
Cristopher Moore\footnote{University of New Mexico and the Santa Fe Institute.}
\and
Alexander Russell\footnote{University of Connecticut}}
\maketitle

\abstract{Quantum $k$-SAT is the problem of deciding whether there is a $n$-qubit state
 which is perpendicular to a set of vectors, each of which lies in the Hilbert space of $k$ qubits.  Equivalently,
 the problem is to decide whether a particular type of local Hamiltonian has a ground state with zero energy.
 We consider random quantum $k$-SAT formulas with $n$ variables and $m=\alpha n$ clauses, and ask at what value of $\alpha$ these formulas cease to be satisfiable.  We show that the threshold for random quantum 3-SAT is
at most $3.594$.  For comparison, convincing arguments from statistical physics suggest that the classical 3-SAT threshold is $\alpha_c \approx 4.267$.  For larger $k$, we show that the quantum threshold is a constant factor smaller than the classical one.
Our bounds work by determining the generic rank of the satisfying subspace for certain gadgets, and then using the technique of differential equations to analyze various algorithms that partition the hypergraph into a collection of these gadgets.  
}


\section{Introduction}
\label{sec:intro}

In quantum $k$-SAT~\cite{bravyi}, each clause corresponds to a projection operator on the Hilbert space of $n$ qubits,
\[
C = (\one_k - \outer{v}{v} ) \otimes \one_{n-k} \, .
\]
Here $\ket{v}$ is a vector in the $2^k$-dimensional Hilbert space of some $k$-tuple of qubits, $\one_k$ is the identity on that Hilbert space, and $\one_{n-k}$ is the identity on the remaining qubits.  A formula is a set of clauses $\phi = \{C_1,\ldots,C_m\}$.  We say that $\phi$ is \emph{satisfiable} if there
 is a state $\ket{\psi}$ which is perpendicular to all the forbidden vectors $\ket{v}$: in other words, if
\[
\innerm{\psi}{C_i}{\psi} = 1 \mbox{ for all $i$}.
\]
We call the subspace of such states $\ket{\psi}$ the \emph{satisfying subspace} $\vsat$.

In addition to being the quantum analogue of a canonical \NP-complete problem, quantum $k$-SAT is an illustrative case of a $k$-local Hamiltonian.
  In that case, $\vsat$ is the subspace spanned by eigenstates of a Hamiltonian $H=\sum_i (I-C_i)$ with zero energy.
 It was shown in~\cite{bravyi} that the decision problem of whether a particular quantum $k$-SAT formula is satisfiable is in \P\ for $k=2$, and is $\QMA_1$-complete for $k \ge 4$, where $\QMA_1$ is the subclass of $\QMA$ where the probability of acceptance for a yes-instance is $1$.

We are also interested in the problem of determining the
rank $\rsat = \dim \vsat$ of the satisfying subspace, or equivalently the degeneracy of the zero-energy ground states.  Determining $\rsat$ is a natural quantum analogue of a classical counting problem, namely determining the number of satisfying assignments of a $k$-SAT formula.  Classically, even for $k=2$ this problem is \numP-complete under Turing reductions~\cite{valiant}.  In the quantum case, it is not obvious that finding $\rsat$ is even in \numP, since the satisfying states may be arbitrarily entangled and may have no succinct description.  Indeed, it seems to us that one can define a natural quantum version of \numP\ as the class of problems consisting of finding the rank of an eigenspace of a $k$-local Hamiltonian, although we do not pursue this further here.

In the classical setting, a lively collaboration between computer scientists, statistical physicists, and mathematicians has grown up around the behavior of random $k$-SAT formulas.  These are constructed in the following way.  In order to construct a random formula $\phi(n,m)$ with $n$ variables and $m$ clauses, we first construct a random $k$-uniform hypergraph with $n$ vertices and $m$ edges, by choosing $m$ times, uniformly and with replacement, from the ${n \choose k}$ possible $k$-tuples of vertices.  Then, for each edge, we choose uniformly from the $2^k$ possible combinations of signs for those $k$ literals.

We are particularly interested in the sparse case, where $m=\alpha n$ for some constant $\alpha$.  There is a conjectured phase transition, where these formulas go from satisfiable to unsatisfiable when $\alpha$ exceeds a critical threshold:
\begin{conjecture}
For each $k \ge 3$, there is a constant $\alpha_c$ such that
\[
\lim_{n \to \infty} \Pr[ \mbox{$\phi(n,\alpha n)$ is satisfiable} ]
= \begin{cases}
1 & \mbox{if $\alpha < \alpha_c$} \\
0 & \mbox{if $\alpha > \alpha_c$} \, .
\end{cases}
\]
\end{conjecture}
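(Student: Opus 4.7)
The plan is to decompose the conjecture into two steps: (i) establishing a sharp threshold sequence $\alpha_c(n)$ around which the probability of satisfiability drops from $1$ to $0$ within a window of width $o(1)$, and (ii) showing that $\alpha_c(n)$ converges to a constant as $n\to\infty$. For step (i) I would invoke Friedgut's sharp threshold theorem. Since satisfiability is a monotone decreasing property under clause addition and the clause distribution is exchangeable, the Bourgain--Friedgut machinery guarantees such an $\alpha_c(n)$, reducing the conjecture to the convergence statement.

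For step (ii), the natural plan is to sandwich $\alpha_c(n)$ between matching deterministic upper and lower bounds. A first-moment calculation on the number $Z_n$ of satisfying assignments gives $\Exp[Z_n] = 2^n(1-2^{-k})^{\alpha n}$ and hence $\alpha_c(n) \le 2^k \ln 2 + o(1)$. For a matching lower bound I would attempt the weighted second moment method of Achlioptas--Peres, assigning each assignment a weight depending on the number of satisfied literals per clause so as to cancel the clustered contribution to $\Exp[Z_n^2]$. To show that a limit actually exists at the critical density, I would apply the interpolation method of Franz--Leone, which exhibits a suitable monotonicity of $n^{-1}\Exp \log Z_n(\beta)$ and thus yields a well-defined limiting free energy; a zero-temperature version of this argument should locate $\lim_{n\to\infty}\alpha_c(n)$.

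The main obstacle is step (ii), specifically closing the gap between the first- and second-moment bounds. The conjectural statistical-physics picture predicts that, well below $\alpha_c$, the solution set fragments into exponentially many clusters separated by linear Hamming distance; this clustering inflates $\Exp[Z_n^2]$ by an exponential factor over $(\Exp[Z_n])^2$ and causes the unweighted second moment method to fail outright. Making the cluster picture rigorous requires the full one-step replica-symmetry-breaking cavity formalism, whose mathematical foundations are only partially developed. In particular, even establishing the mere existence of $\lim \alpha_c(n)$ for small $k$---most notably the case $k=3$ most relevant to our main theorem---appears to lie outside the reach of current techniques, and I expect any complete proof to require substantial combinatorial input well beyond the gadget-based differential equation methods used elsewhere in this paper.
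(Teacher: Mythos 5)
The statement you are addressing is labeled a \emph{Conjecture} in the paper; the authors do not prove it and do not claim to. It is the well-known satisfiability threshold conjecture for random classical $k$-SAT, which at the time this paper was written was open for every $k \ge 3$ (and for $k=3$ remains open). There is therefore no ``paper's own proof'' to compare against; the authors cite the conjecture only to frame what an upper or lower bound on the (conjectural) threshold means.

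As a survey of what a proof would require, your account is essentially accurate. Friedgut's theorem (with Bourgain's appendix) does give a sharp threshold sequence $\alpha_c(n)$, reducing the conjecture to the convergence of that sequence. Your first-moment bound is correct, and the Achlioptas--Peres weighted second-moment method yields a matching lower bound only to leading order, $2^k \ln 2 - O(k)$; the residual gap of order $k$ is precisely where the unweighted and even the cleverly weighted second moment fail, for the clustering reason you describe. Interpolation in the Franz--Leone/Guerra style (and its later combinatorial refinements) establishes existence of a limiting free-energy density, but turning this into existence of the satisfiability threshold still demands control of the zero-temperature limit that those techniques alone do not provide. You are right that the one-step replica-symmetry-breaking picture is the conjectured mechanism, that rigorizing it is a major undertaking, and that $k=3$ is out of reach. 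For the record, subsequent to this paper the conjecture was in fact resolved for all sufficiently large $k$ by Ding, Sly, and Sun via essentially the 1RSB route you sketch, so your assessment of the plausible path was prescient; but this does not retroactively make the paper's statement anything other than a conjecture. In short, your proposal correctly recognizes it cannot close the argument, which is the right conclusion: neither you nor the authors prove this statement, and the paper's results are explicitly conditional on (or independent of) it.
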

In the absence of a proof of this conjecture, one can prove statements of the form that $\phi$ is unsatisfiable with high probability if $\alpha > \alpha^*$, or satisfiable with high probability if $\alpha < \alpha^\ddagger$.  Then, assuming that a phase transition exists, $\alpha^*$ and $\alpha^\ddagger$ are upper and lower bounds on the threshold $\alpha_c$.  The state of the art for classical 3-SAT is~\cite{3satlower1,3satlower2,3satupper}
\[
3.52 \le \alpha_c \le 4.490 \, ,
\]
although overwhelmingly convincing arguments from physics~\cite{mezard} indicate that
\[
\alpha_c \approx 4.267 \, .
\]

In the quantum case, we can similarly define a random quantum $k$-SAT formula $\phi(n,m)$ as a random hypergraph, where for each edge we choose the forbidden vector $\ket{v}$ uniformly from the vectors of norm $1$ in the Hilbert space $\C_2^{\otimes k}$ of those $k$ qubits.  We can then conjecture an analogous phase transition at a critical density $\alpha_c^q$.  Laumann et al.~\cite{laumann} showed that
\[
0.818... \le \alpha_c^q \le \alpha_c \, ,
\]
where $0.818...$ is the density at which the hypergraph has a nonempty 2-core with high probability.
In this paper, we show that
\[
\alpha_c^q \le 3.594 \, .
\]
Note that this upper bound is well below the accepted value of the classical 3-SAT threshold.  We also show that for all $k \ge 4$,
\[
\alpha_c^q \le 2^k b \, ,
\]
where $b \approx 0.573$.  Since the classical threshold grows as $2^k \ln 2$~\cite{ach-peres}, this shows that the ratio $\alpha_c^q / \alpha_c$ is strictly less than $1$.

In order to prove these results, we exploit the observation of Laumann et al.\ that, once the hypergraph $G$ is fixed, $\rsat$ takes a generic value $\rgen$ with probability $1$.  One way to see this is to note that with probability $1$, the components of the clause vectors are algebraically independent transcendentals.  Then any subdeterminant of the matrix of forbidden vectors is zero if and only if it is zero when these components are replaced by indeterminates.  Moreover, for any particular choice of the clause vectors $\ket{v}$ we have $\rsat \ge \rgen$, since this choice can only result in linear dependences among the forbidden vectors and thus increase the rank of the satisfying subspace.



Our bounds work by partitioning random hypergraphs into gadgets for which we can compute $\rgen$ exactly.  In order to show that certain partitions exist, we use the technique of differential equations to analyze simple greedy algorithms.  To our knowledge, this is the first time that differential equations have been used to prove \emph{upper} bounds on satisfiability thresholds.






\section{The case $k=2$}
\label{sec:k2}

As a warm-up, in this section we reproduce results of Laumann et al.~\cite{laumann} on quantum 2-SAT, determining $\rgen$ for all multigraphs, and in particular determining for which multigraphs the corresponding formula is generically satisfiable.

\begin{theorem}
\label{thm:k2}
Let $G$ be a connected multigraph with $n$ vertices and $m$ edges.
  If we form a quantum 2-SAT formula by replacing each edge $(i,j)$ with a clause forbidding a
  random vector $\ket{v_{ij}} \in \C_2^{\otimes 2}$, then its generic rank $\rgen$ is
\be
\label{R}
\rgen=\left\{ \ba{rcll} n+1 &\mbox{if}& m=n-1 & \mbox{\rm ($G$ is a tree)}, \\
2 &\mbox{if} & m=n & \mbox{\rm ($G$ is a cycle or a tree with a double edge)}, \\
1 &\mbox{if} & n=2 \quad \mbox{and} \quad m=3 & \mbox{\rm ($G$ consists of a triple edge)},\\
0 &\mbox{if} & n\ge 3 \quad \mbox{and} \quad m>n. & \\
\ea
\right.
\ee
\end{theorem}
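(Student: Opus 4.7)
The plan centers on satisfying \emph{product states} $\bigotimes_v \ket{\psi_v}$. Parametrizing $\ket{\psi_v}\propto \ket{0}+z_v\ket{1}$ by an affine coordinate $z_v\in \C\cup\{\infty\}$, the constraint imposed by an edge $(u,v)$ with generic $\ket{v_{uv}}=a\ket{00}+b\ket{01}+c\ket{10}+d\ket{11}$ becomes the M\"obius relation $z_v=-(a+cz_u)/(b+dz_u)$. Since $\rgen=\min \rsat$ over all choices of forbidden vectors, each case is attacked by producing satisfying states to lower bound $\rgen$ and by either specializing $\ket{v_{uv}}$ or computing the rank of a restricted linear map to upper bound it.

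For the tree case ($m=n-1$), I fix a root $r$; each $z_v$ is a composition of M\"obius maps along the unique $r$-to-$v$ path, hence a linear fractional function of $z_r$. After clearing denominators, the product state becomes a polynomial of degree $n$ in $z_r$ with vector coefficients in $\C_2^{\otimes n}$, and its $n+1$ coefficients are generically linearly independent satisfying states, yielding $\rgen \ge n+1$. For the matching upper bound I specialize each $\ket{v_{uv}}$ to the singlet $(\ket{01}-\ket{10})/\sqrt{2}$: every edge constraint then forces invariance under the swap of qubits $u$ and $v$, and since the edge-transpositions of a connected tree generate $S_n$, the satisfying subspace is exactly the totally symmetric subspace of $\C_2^{\otimes n}$, of dimension $n+1$, giving $\rgen\le n+1$.

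For $m=n$ (one cycle, possibly with pendant trees, or a tree with one doubled edge), the product-state constraints close into a cyclic M\"obius composition $z_r\mapsto z_r$ which is generically loxodromic with two distinct fixed points; each determines a satisfying product state which extends uniquely through the rest of $G$, so $\rgen\ge 2$. For the upper bound I pick a spanning tree $T\subseteq G$ with $e^*$ the extra edge on qubits $u,v$ and forbidden vector $\ket{v^*}$, and consider the linear map $\Phi:\vsat(T)\to\C_2^{\otimes(n-2)}$ sending $\ket{\psi}$ to its partial inner product with $\ket{v^*}$ on qubits $u,v$. Since $\vsat(G)=\vsat(T)\cap\ker\Phi$, one has $\dim\vsat(G)=(n+1)-\dim\Phi(\vsat(T))$. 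On the product-state curve, $\Phi(\bigotimes_w\ket{\psi_w})=c(z_r)\cdot\ket{\psi_{\mathrm{rest}}(z_r)}$ where $c$ is a scalar polynomial of degree $2$ and $\ket{\psi_{\mathrm{rest}}}=\bigotimes_{w\neq u,v}\ket{\psi_w}$ is a vector polynomial of degree $n-2$. Every coefficient of $c\cdot\ket{\psi_{\mathrm{rest}}}$ is a linear combination of the $n-1$ coefficients of $\ket{\psi_{\mathrm{rest}}}$, so $\dim\Phi(\vsat(T))\le n-1$, with equality for generic $\ket{v}$'s, giving $\dim\vsat(G)=2$. The case $n=2, m=3$ is immediate: three generic vectors in $\C_2^{\otimes 2}$ span a $3$-dimensional subspace, so $\rgen=1$.

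Finally, for $n\ge 3$ and $m>n$, any connected $G$ contains a subgraph $G'$ on the same vertex set consisting of a spanning tree plus two extra edges; since $\vsat(G)\subseteq\vsat(G')$ it suffices to show $\rgen(G')=0$. After adjoining the first extra edge one is in Case 2 with $\vsat$ two-dimensional and spanned by two satisfying product states corresponding to distinct M\"obius fixed points. Applying the same partial-inner-product map as in Case 2 to the second extra edge: its image on each of the two product states is a nonzero scalar (generically) times a product state on the remaining qubits, and the two resulting ``rest'' product states are generically linearly independent because the two fixed points give distinct $z_w$ at every vertex. Hence the map has rank $2$ on the $2$-dimensional space and $\vsat(G')=0$. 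The main technical obstacle in this program is verifying, in Case 2, that the degree-$(n-2)$ product-state curve $\ket{\psi_{\mathrm{rest}}(z_r)}$ is linearly non-degenerate (spans an $(n-1)$-dimensional subspace); this amounts to showing that a polynomial identity in the $\ket{v}$-parameters is not identically zero, which one verifies by exhibiting a single specialization where the determinant of coefficients is nonzero.
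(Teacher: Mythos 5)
Your proposal takes a genuinely different route from the paper's. The paper works almost entirely inside the symmetric subspace: it uses invertible local operators to normalize all spanning-tree clauses to singlets, identifies $\vsat$ of the tree with the totally symmetric subspace $\calS_n$ of dimension $n+1$, and then handles the surplus clauses by placing them all on a single pair of qubits. The case $m\ge n+1$, $n\ge 3$ is dispatched in one line by monogamy of entanglement (an entangled two-qubit state has no symmetric extension to $n\ge 3$ qubits), and the case $m=n$ is settled with an explicit choice $\ket{0,1}+\ket{1,0}$ plus particle-number conservation for the upper bound, and a short proposition about symmetric entangled states for the lower bound. Your argument instead works on the product-state variety: each satisfying product state is tracked by a M\"obius coordinate $z_v$, the tree's satisfying subspace is recovered as the span of the $n+1$ coefficient vectors of a degree-$n$ rational normal curve, cycles become fixed points of a loxodromic composite, and surplus clauses are analyzed through the rank of the partial-contraction map $\Phi$. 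This is more unified (one framework covers all four cases and both directions of each inequality) and more elementary (no monogamy argument), at the cost of some bookkeeping about clearing denominators and degenerate fixed points, none of which is a real problem generically.

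The one place you flag as incomplete --- verifying that the coefficients of the degree-$(n-2)$ curve $\ket{\psi_{\mathrm{rest}}(z_r)}$ span an $(n-1)$-dimensional space, so that $\rank\Phi$ is exactly $n-1$ --- is indeed the crux of the $m=n$ upper bound, and it does need to be closed before the argument (and the $m>n$ case that leans on it) is complete. It is, however, closeable exactly as you suggest, and the witnessing specialization is the same one the paper uses from the outset: set every spanning-tree clause to the singlet, so all the M\"obius maps on tree edges are the identity, $z_w=z_r$ for every $w$, and $\ket{\psi_{\mathrm{rest}}(z_r)}=(\ket{0}+z_r\ket{1})^{\otimes(n-2)}$, whose coefficients are the $n-1$ linearly independent symmetric basis vectors; one also checks $c(z_r)=\innerm{v^*}{}{(\ket 0+z_r\ket 1)^{\otimes 2}}$ is a nonzero quadratic for generic (or even the paper's explicit) $\ket{v^*}$. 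You should also make explicit the small algebraic fact that multiplying a vector-valued polynomial by a nonzero scalar polynomial does not change the span of its coefficients, since that is what converts ``$\le n-1$'' to ``$=n-1$'' once the curve is nondegenerate; it is true and easy, but it is doing real work. With those two points written out, your proof is sound.
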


\begin{proof}
Let  $V=\{1,\ldots,n \}$ be the vertices of $G$ and $E$ be its edges.
Let $\phi=\{ \ket{v_{ij}} \}_{(i,j)\in E}$ be a fixed instance of 2-SAT defined on $G$.
Here $\ket{v_{ij}} \in \C_2 \otimes \C_2$ is a forbidden state associated with
the edge $(i,j)$.

Let $O$ be an invertible local operator, or ILO---that is, $O=\bigotimes_{i\in V} O_i$ where all the $O_i$ are invertible, but not necessarily unitary.  Define a new instance
\[
O \cdot \phi = \{ O_i \otimes O_j \, \ket{v_{ij}} \}_{(i,j)\in E}.
\]
We claim that $\phi$ and $O \cdot \phi$ have the same rank. Indeed, a state $\ket{\psi}$
is a satisfying assignment for $\phi$ iff a state $(O^\dag)^{-1}\, \ket{\psi}$ is a satisfying assignment for $O\cdot \phi$.

If $\phi$ is a generic instance, all states $\ket{v_{ij}}$ are entangled. Let $T$ be any spanning tree of $G$.
Then there exists an ILO $O$ that maps all forbidden states on the edges of $T$ to singlets
$\frac{1}{\sqrt{2}} (\ket{0,1}-\ket{1,0})$~\cite{bravyi}.
The operator $O$ maps forbidden states on edges $(i,j) \notin T$ to some new forbidden states
which are still generic (although their new distribution might not be uniform). Thus it suffices to compute $\rgen$
for instances $\phi$ such that all edges of $T$ are singlets and all other edges are generic states.

Let $\phi_{\rm tree}$ be the restriction of $\phi$ onto the tree $T$ (all clauses $(i,j)\notin T$ are removed).
Any satisfying assignment of $\phi_{\rm tree}$
is invariant under transpositions of any pair of qubits $(i,j)\in T$, and thus invariant under
any permutation of qubits.  Obviously, the converse is also true.
Thus the satisfying subspace of $\phi_{\rm tree}$
is exactly the totally symmetric subspace $\calS_n \subset \C_2^{\otimes n}$, which has dimension $\dim \calS_n=n+1$.

Suppose $m=n-1$.  Then $G$ is a tree, $\phi=\phi_{\rm tree}$, and thus $\rgen=n+1$.

Suppose $m\ge n$.
Let $\ket{v_{ij}}$ be any clause of $\phi\backslash \phi_{\rm tree}$.
Since the satisfying assignments of $\phi$ span some subspace of $\calS_n$, the choice of $i$ and $j$ doesn't matter---applying the clause $\ket{v_{ij}}$ to any pair of qubits gives
the same rank.  Let us apply all clauses of $\phi\backslash \phi_{\rm tree}$ to the
pair of qubits $1, 2$.  There are $m-n+2$ forbidden states on this
pair of qubits: the singlet from $\phi_{\rm tree}$, and $m-(n-1)$ forbidden states from $\phi\backslash \phi_{\rm tree}$.

If $m\ge n+1$ then there are at least $3$ forbidden states on qubits $1,2$.
These completely fix a state of qubits $1,2$,
say $\ket{\omega_{1,2}}$.  In the generic case $\ket{\omega_{1,2}}$ is entangled.
If $n\ge 3$, the monogamy of entanglement implies that $\ket{\omega_{1,2}}$ cannot be symmetrically extended to $n$ qubits.  In that case, there are no satisfying assignments and $\rgen=0$.
If $n=2$ then $\ket{\omega_{1,2}}$ is the unique satisfying assignment, and $\rgen=1$.

It remains to consider the case $m=n$, where $G$ contains a single cycle or is a tree with a double edge.
Now qubits $1,2$ have two forbidden states: the singlet and some state $\ket{\psi_{1,2}}$.
We can get an upper bound on $\rgen$ by choosing  $\ket{\psi_{1,2}}$ adversarially, for example,
$\ket{\psi_{1,2}} = \ket{0,1}+\ket{1,0}$.  In this case we can look for satisfying assignments with a fixed number
of 1s, since all clauses commute with the particle number operator $\sum_{j=1}^n \outer{1}{1}_j$.
For any number of particles $0 \le m \le n$ there is only one symmetric state $\ket{S_m}$---the uniform superposition
of all binary strings with Hamming weight $m$.  One can easily check that $\ket{S_m}$ is orthogonal to
$\ket{0,1}+\ket{1,0}$ iff $m=0$ or $m=n$. Thus there are two satisfying assignments: $\ket{0^{\otimes n}}$
and $\ket{1^{\otimes n}}$.  This proves that $\rgen \le 2$.

To show that $\rgen \ge 2$, for any $\ket{\psi_{1,2}}$ we can try to construct a satisfying assignment $\ket{\varphi^{\otimes n}}$
for some $\ket{\varphi} \in \C_2$.  Without loss of generality $\ket{\psi_{1,2}}$ is symmetric, since otherwise it is some
linear combination of the singlet and a symmetric state and we can redefine the clause.  In the generic case $\ket{\psi_{1,2}}$ is also entangled.  Consider the following proposition:
\begin{prop}
For any symmetric entangled state $\ket{\psi} \in \C_2 \otimes \C_2$ there exist two linearly independent states
$\ket{\varphi}, \ket{\varphi'} \in \C_2$ such that
\be
\inner{\psi}{\varphi \otimes \varphi} = \inner{\psi}{\varphi'\otimes \varphi'} = 0 \, .
\ee
\end{prop}
\begin{proof}
Let $A_{ij}=\inner{\psi}{i,j}$ be the $2\times 2$ complex matrix corresponding to $\ket{\psi}$.  We are promised that $A$ is symmetric, $A^T=A$, and non-singular.  Using Gaussian elimination, for symmetric matrices one can find an invertible complex matrix $O$ such that
$OAO^T=\one$.  This is equivalent to
\be
(O \otimes O) \ket{\psi} = \ket{0,0} + \ket{1,1} \, .
\ee
Now we can choose
\be
\ket{\varphi}, \ket{\varphi'} = (O^\dag)^{-1} (\ket{0} \pm i \ket{1}) \, .
\ee
\end{proof}

\noindent
Thus $\rgen=2$, and the proof is complete.
\end{proof}

Now suppose that we form a random multigraph with $n$ vertices and $m=\alpha n$ edges by choosing uniformly with replacement from the ${n \choose 2}$ possible edges.  It is well known
that if $\alpha < 1/2$, then with high probability every connected component has at most one cycle, or one double edge, but never both---while if $\alpha > 1/2$, then with high probability there is a giant connected component with multiple cycles.  Thus, as already shown in~\cite{laumann}, random quantum 2-SAT has a phase transition from satisfiability to unsatisfiability at $\alpha = 1/2$.

On the other hand, the classical 2-SAT transition is at $\alpha = 1$ (see e.g.~\cite{2sat}).  This is a good illustration of the fact that the generic quantum problem is much more constrained.

\section{Expected gadget projectors}
\label{sec:gadget}

The next lemma generalizes a result of Laumann et al.~\cite{laumann}, which showed that addiing a $k$-clause reduces $\rgen$ by a factor of $1-2^{-k}$.  Our argument is somewhat simpler.

\begin{lemma}
\label{lem:gadget}
Let $G$ and $H$ be hypergraphs with $n$ and $t \le n$ vertices respectively.  Let $G \cup H$ denote the hypergraph resulting from adding a copy of $H$, on some subset of $G$'s vertices, to $G$.  Then
\[
\rgen(G \cup H) \le 2^{-t} \rgen(H) \rgen(G) \, .
\]
\end{lemma}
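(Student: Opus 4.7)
The plan is to upper bound $\rgen(G \cup H) = \dim(V_G \cap V_H')$, where $V_G \subseteq \C_2^{\otimes n}$ is the satisfying subspace of $G$ and $V_H' := V_H \otimes \C_2^{\otimes(n-t)}$ is the embedding of $H$'s satisfying subspace into the $n$-qubit ambient space. I would combine two ingredients: a symmetry calculation showing that the expected satisfying projector is a scalar multiple of the identity, and the elementary inequality $\dim(U \cap W) \le \tr(\Pi_U \Pi_W)$ valid for any two subspaces $U, W$ of a common Hilbert space.

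For the first ingredient, I would establish that for any hypergraph $F$ on $N$ vertices with random clause vectors,
\[
\Exp\bigl[\Pi_{V_F}\bigr] = \frac{\rgen(F)}{2^N}\,\one_N.
\]
The joint distribution of the clause vectors is invariant under local unitaries $\U(2)^{\otimes N}$: conjugating a clause $\one - \outer{v}{v}$ supported on qubits $i_1,\ldots,i_k$ by $U_1 \otimes \cdots \otimes U_N$ replaces $\ket{v}$ by $(U_{i_1} \otimes \cdots \otimes U_{i_k})\ket{v}$, which has the same uniform-on-the-sphere distribution. Consequently $\Exp[\Pi_{V_F}]$ commutes with every local unitary, and since $\C_2^{\otimes N}$ is irreducible as a representation of $\U(2)^{\otimes N}$ (each tensor factor is an irrep of a distinct direct factor), Schur's lemma forces $\Exp[\Pi_{V_F}] = c\,\one_N$. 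The scalar $c$ is pinned down by the trace, using that $\dim V_F = \rgen(F)$ with probability one: the non-generic locus in the clause-vector space is a proper algebraic subvariety and hence has measure zero under the continuous distribution.

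For the second ingredient, $\tr(\Pi_U \Pi_W)$ equals the sum of squared cosines of the principal angles between $U$ and $W$; each direction in $U \cap W$ contributes exactly $1$ while all remaining terms are nonnegative, hence $\dim(U \cap W) \le \tr(\Pi_U \Pi_W)$. Applying this with $U = V_G$ and $W = V_H'$, taking expectations over the independent random clause vectors of $G$ and $H$, and using $\Exp[\Pi_{V_H'}] = \Exp[\Pi_{V_H}] \otimes \one_{n-t}$, I obtain
\[
\rgen(G \cup H) = \Exp\bigl[\dim(V_G \cap V_H')\bigr] \le \tr\bigl(\Exp[\Pi_{V_G}] \cdot \Exp[\Pi_{V_H'}]\bigr) = \tr\left(\frac{\rgen(G)\rgen(H)}{2^{n+t}}\,\one_n\right) = \frac{\rgen(G)\rgen(H)}{2^t}.
\]

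The one step that needs careful justification is the Schur/irreducibility argument underlying the formula for $\Exp[\Pi_{V_F}]$; everything else (independence of the clause vectors across $G$ and $H$, the almost-sure attainment of $\rgen$, and the trace computation) is routine bookkeeping.
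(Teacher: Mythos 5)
Your proof is correct and follows essentially the same strategy as the paper: establish by a symmetry argument that the expected satisfying projector is a scalar multiple of the identity, then apply the trace inequality $\dim(U \cap W) \le \tr(\Pi_U \Pi_W)$. The only cosmetic differences are that you invoke Schur's lemma for the irreducible representation of $\U(2)^{\otimes N}$ where the paper argues that commuting with all single-qubit unitaries implies commuting with all Paulis (which span the full matrix algebra), and that you average over both $G$'s and $H$'s clause vectors simultaneously, while the paper fixes $G$'s clause vectors in general position (so that $\tr\Pi_G = \rgen(G)$ exactly) and averages only over $H$'s; both routes rely on the same independence of the two clause-vector families and yield the same bound.
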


\begin{proof}
Let $\Pi_H$ be the projection operator onto the satisfying subspace of $H$, viewed as a subspace of $\C_2^{\otimes n}$.  Now consider its expectation $\Exp \Pi_H$, taken over the choice of clause vectors $\ket{v}$.  Since each $\ket{v}$ is chosen uniformly from the sphere in $\C_2^{\otimes k}$, and since the uniform measure is invariant under any rotation of a single qubit, $\Exp \Pi_H$
commutes with any one-qubit unitary operator affecting a vertex in $H$.
Since $\Pi_H$ acts as the identity on the other $n-t$ vertices, it commutes with one-qubit unitary operators on them as well.

with any one-qubit operator affecting a vertex in $H$.  Since $\Pi_H$ acts as the identity on the other $n-t$ vertices, it commutes with one-qubit operators on them as well.

Thus $\Pi_H$ commutes with any Pauli operator.  Since these form a basis for the full matrix algebra acting on $\C_2^{\otimes n}$, it follows that $\Exp \Pi_H$ must be a scalar.  Since
\[
\rank \Pi_H = 2^{n-t} \rgen(H)
\]
holds with probability $1$ (where the factor of $2^{n-t}$ comes from being able to set the other qubits of $G$ arbitrarily), it also holds in expectation.  Thus
\[
\tr \Exp \Pi_H = \Exp \tr \Pi_H = 2^{n-t} \rgen(H) \, ,
\]
and therefore
\[
\Exp \Pi_H = 2^{-t} \rgen(H) \,\one \, .
\]

Now suppose the clause vectors of $G$ are in general position.  For any choice of clause vectors on $H$ we have
\[
\rgen(G \cup H)
\le \rank \Pi_{G \cup H}
\le \tr (\Pi_G \Pi_H \Pi_G) \, .
\]
(This follows from the fact that $ABA$ is positive whenever $A$ and $B$ are projection operators.)  This is also true in expectation over the clause vectors of $H$, so
\[
\rgen(G \cup H)
\le \Exp \tr (\Pi_G \Pi_H \Pi_G)
= 2^{-t} \rgen(H) \tr \Pi_G
= 2^{-t} \rgen(H) \rgen(G) \, ,
\]
completing the proof.
\end{proof}

As pointed out in~\cite{laumann}, if we take $H$ to be a single clause for which $\rgen = 2^k-1$, this shows that for a random formula with $n$ variables and $m=\alpha n$ clauses we have
\[
\rgen \le 2^n (1-2^{-k})^m = \left[ 2 (1-2^{-k})^\alpha \right]^n \, .
\]
If $\alpha > \log_{8/7} 2 \approx 5.191$ then this bound is exponentially small, showing that such formulas are unsatisfiable and placing an upper bound on the critical threshold.  A similar argument applies in the classical case.  However, in the next sections we will show that in the quantum case, we can prove much stronger upper bounds by computing $\rgen$ for larger gadgets.

\section{Two handy gadgets}

In this section we compute the generic rank exactly for two families of hypertrees.  We will use these calculations to derive our upper bounds on the critical threshold.

\subsection{The sunflower}
\label{sec:sunflower}

Consider the \emph{$(d,k)$-sunflower}, the $k$-uniform hypergraph consisting of $n$ clauses (edges), each pair sharing a common ``center'' vertex $z$.  Specifically, the graph is defined over the $1+d(k-1)$ vertices
\[
\{z\} \cup \{ x_i^j \mid 1 \leq j \leq d, 1 \leq i \leq k-1 \} \, ,
\]
and contains the $d$ clauses
\[
C_j = \{ z \} \cup \{ x_i^j \mid 1 \leq i \leq k-1 \}\,.
\]
See Fig.~\ref{fig:sunflower} for an example.

\begin{figure}
\begin{center}
\includegraphics[width=0.4 \textwidth]{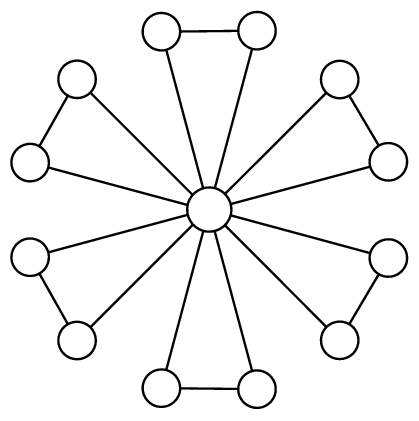}
\end{center}
\caption{The $(6,3)$-sunflower.}
\label{fig:sunflower}
\end{figure}


\begin{lemma}
\label{lem:sunflower}
Let $S(d,k)$ denote the generic rank of the $(d,k)$-sunflower. Then
$$
S(d,k) = 2(2^{k-1} - 1)^d  \left( \frac{d}{2^k-2} + 1 \right) \, .
$$
\end{lemma}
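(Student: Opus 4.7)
The plan is to reduce the computation to a combinatorial counting problem: first use invertible local operators (ILOs) to put every clause into a single canonical form, then parametrize the satisfying subspace in coordinates and count free parameters.

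First I will invoke the ILO technique from the proof of Theorem~\ref{thm:k2}. For each petal $j$, write the forbidden vector as $\ket{v_j} = \ket{0}_z \ket{a_j}_{P_j} + \ket{1}_z \ket{b_j}_{P_j}$ in $\C_2 \otimes \mathcal{P}$, where $\mathcal{P} := \C_2^{\otimes(k-1)}$. For generic $\ket{v_j}$ the vectors $\ket{a_j}, \ket{b_j}$ are linearly independent (otherwise $\ket{v_j}$ would be a product state), so there is an invertible operator $O_{P_j}$ on petal $j$'s qubits sending them to two fixed basis vectors $\ket{e_0}, \ket{e_1}$ of $\mathcal{P}$. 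Taking $O = \bigotimes_j O_{P_j}$ (identity on the center) gives an ILO after which every clause has the same canonical form $\ket{v_j} = \ket{0}_z \ket{e_0}_{P_j} + \ket{1}_z \ket{e_1}_{P_j}$, while preserving the generic rank.

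Next I will write any state as $\ket{\psi} = \ket{0}_z \ket{\psi_0} + \ket{1}_z \ket{\psi_1}$ and expand $\ket{\psi_s} = \sum_t c^{(s)}_t \ket{t_1}\cdots\ket{t_d}$ in the standard basis of $\mathcal{P}^{\otimes d}$, indexed by tuples $t = (t_1, \ldots, t_d)$. Each clause then yields the linear equations $c^{(0)}_t + c^{(1)}_{t'} = 0$ for every pair $(t, t')$ agreeing outside some position $j$ with $t_j = e_0$ and $t'_j = e_1$. So $c^{(0)}_t$ is unconstrained exactly when no $t_j$ equals $e_0$, giving $(2^{k-1}-1)^d$ free parameters; otherwise $c^{(0)}_t$ is determined by $c^{(1)}$ through any coordinate with $t_j = e_0$. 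This forces a consistency condition on $c^{(1)}$: if $t_j = t_{j'} = e_0$ for distinct $j, j'$, equating the two derived expressions for $c^{(0)}_t$ forces $c^{(1)}_{t[j \to e_1]} = c^{(1)}_{t[j' \to e_1]}$, so $c^{(1)}$ must be constant on the equivalence classes of index tuples generated by swapping an $(e_0, e_1)$ pair between two positions.

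Counting these orbits is elementary: stratifying by the subset $S \subseteq \{1, \ldots, d\}$ of positions with value in $\{e_0, e_1\}$, by the number of $e_1$'s among those positions, and by the arbitrary values of the remaining positions in $\mathcal{P} \setminus \{e_0, e_1\}$ yields $\sum_{m=0}^{d} \binom{d}{m}(2^{k-1}-2)^{d-m}(m+1)$, which the binomial theorem and its derivative at $x=1$ simplify to $(2^{k-1}-1)^d + d(2^{k-1}-1)^{d-1}$. Adding the $(2^{k-1}-1)^d$ free coordinates of $c^{(0)}$ gives $2(2^{k-1}-1)^d + d(2^{k-1}-1)^{d-1}$, which rearranges to the claimed $2(2^{k-1}-1)^d\bigl(d/(2^k-2) + 1\bigr)$. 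The main thing to verify carefully is that the swap-generated equivalence already captures \emph{all} constraints on $c^{(1)}$; this is handled by showing that once $c^{(1)}$ is orbit-constant, the assignment $c^{(0)}_t := -c^{(1)}_{t[j \to e_1]}$ (for any valid $j$) is well-defined and satisfies every original clause equation by construction.
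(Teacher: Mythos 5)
Your proof is correct and yields the right formula, but it takes a genuinely different route from the paper's, so let me compare. The paper reduces to the case where each petal's clause vector is an entangled state on (center, first petal qubit) tensored with $\ket{0^{k-2}}$ on the remaining ``ancilla'' qubits; it then decomposes the satisfying subspace by which petals are ``active'' (ancillas all $\ket{0}$), and for the set $A$ of active petals reads off the rank $|A|+2$ of the surviving star graph directly from Theorem~\ref{thm:k2}, multiplying by $(2^{k-1}-2)^{d-|A|}$ for the inactive petals and summing. Your approach instead pushes the clause vector all the way to a single fixed canonical entangled state $\ket{0}\ket{e_0}+\ket{1}\ket{e_1}$, writes out the linear constraints on the coefficients $c^{(s)}_t$, and counts solutions by orbit/connected-component analysis of the resulting constraint graph. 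The subset $S$ of positions taking values in $\{e_0,e_1\}$ plays exactly the role of the paper's active set $A$, and your $m+1$ orbits for $c^{(1)}$ together with the single free no-$e_0$ tuple for $c^{(0)}$ recover the paper's $|A|+2$---so the orbit count is, in effect, a self-contained re-derivation of the star-graph rank rather than a citation of Theorem~\ref{thm:k2}. Both approaches then give $\sum_{m}\binom{d}{m}(m+2)(2^{k-1}-2)^{d-m}$ and agree.

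Two minor points worth tightening. First, the operator $O=\bigotimes_j O_{P_j}$ you use is not an ILO in the strict per-qubit sense of Theorem~\ref{thm:k2} (each $O_{P_j}$ acts jointly on all $k-1$ qubits of a petal), so the invariance claim doesn't follow by citation; but the invariance argument only needs $O$ invertible and petal-local (so each transformed clause still lives in $\calH_0\otimes\calH_j$), and this is the same latitude the paper itself uses with its unitaries $U_j$. Second, it would be good to state explicitly that the rank you compute with the canonical vectors equals $\rgen$: for any fixed generic choice of clause vectors, the ILO you construct is a bijection between the two satisfying subspaces, so the rank is the same for all generic choices and equals the rank at the canonical point. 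With those clarifications the argument is complete, and it has the advantage of being more elementary (no appeal to the quantum $2$-SAT classification), though the paper's ancilla decomposition is the one it re-uses for the nosegay lemma.
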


\begin{proof}
Decompose the Hilbert space of  the $(d,k)$-sunflower as
\[
\calH=\calH_0 \otimes \calH_1 \otimes \cdots \otimes \calH_d \, ,
\]
where
\[
\calH_0 = \C^2, \quad \calH_1= \cdots =\calH_d=(\C^2)^{\otimes k-1} \, .
\]
Here $\calH_0$ describes the central qubit and $\calH_j$ describes the other $k-1$ qubits on the $j$th petal for
$1 \le j \le d$.  Let $C_j$ denote the clause on the $j$th petal, and let $\ket{v_j} \in \calH_0 \otimes \calH_j$ be its forbidden state.
Clearly $\ket{v_j}$ has at most two non-zero Schmidt coefficients, so we can always choose a unitary
operator $U_j$ acting on $\calH_j$ such that
\be
\ket{v_j}=(I_0\otimes U_j )\, \ket{u_j} \otimes \ket{0^{\otimes k-2}} \, .
\ee
Here $\ket{u_j}$ is some entangled state between the central qubit and the first qubit of the $j$th petal.
Let us refer to the remaining  $k-2$ qubits of the $j$th petal that are projected onto $\ket{0}$ as
ancillas. Clearly $U_j$ does not change the rank, so without loss of generality we can assume that $U_j=\one$ for all $j$.

If we ignore the ancillas, then we get an instance of quantum 2-SAT on a star graph with $d$ edges.  By Theorem~\ref{thm:k2}, its
generic rank is $d+2$. Clearly ancillas of the $j$th petal can be ignored iff they are all set to the state $\ket{0}$.
Let us say that such a petal is {\em active}.  Otherwise, if at least one ancilla of the $j$th petal is $\ket{1}$, it becomes {\em inactive} because
the corresponding clause is already satisfied.  An inactive petal contributes a factor of $2^{k-1}-2$ to the rank.
For a fixed subset $A$ of active petals, the generic rank is
\be
\label{RA}
R(A)=(a+2)(2^{k-1}-2)^{d-a}, \quad a \equiv |A| \, .
\ee
Therefore, the generic rank for the $(d,k)$-sunflower is
\be
\label{RA1}
R=\sum_{A\subseteq \{1,\ldots,d\}}  R(A) = \sum_{a=0}^d {d \choose a} (a+2)(2^{k-1}-2)^{d-a}
= 2(2^{k-1} - 1)^d  \left( \frac{d}{2^k-2} + 1 \right) \, .
\ee
This completes the proof.
\end{proof}


\subsection{The nosegay}
\label{sec:nosegay}

For another example, consider a \emph{nosegay}, as shown in Fig.~\ref{fig:nosegay}.  It consists of a single edge, where each of its vertices has some number of additional edges attached to it.

\paragraph{$3$-uniform nosegays} Let us momentarily restrict our attention to the case $k=3$.  Then an $(a,b,c)$-nosegay has $a$, $b$, and $c$ additional edges, for a total of $a+b+c+1$ edges and $3+2(a+b+c)$ vertices.

\begin{figure}
\begin{center}
\includegraphics[width=0.4 \textwidth]{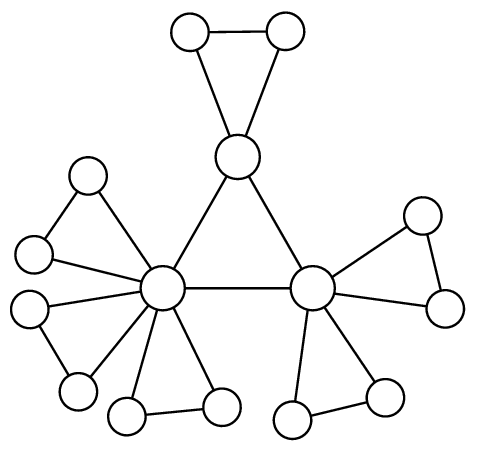}
\end{center}
\caption{The $(1,2,3)$-nosegay.}
\label{fig:nosegay}
\end{figure}

\begin{lemma}
\label{lem:nosegay}
Let $R_{(a,b,c)}$ denote the generic rank of the 3-uniform $(a,b,c)$-nosegay.  Then
\[
R_{(a,b,c)}
= 3^{a+b+c-3} \left[ (a+6)(b+6)(c+6)-(a+3)(b+3)(c+3) \right]
\]
\end{lemma}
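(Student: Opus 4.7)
The plan is to extend the sunflower decomposition of Lemma~\ref{lem:sunflower} to three hubs simultaneously, and then to handle the central 3-clause via a symmetric-subspace calculation.

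First I would Schmidt-decompose each of the $a+b+c$ petal clauses exactly as in the proof of Lemma~\ref{lem:sunflower}: apply a local unitary on the two non-hub qubits of each petal to bring the forbidden vector into the form $\ket{u_j}\otimes\ket{0}_{\text{anc}}$, where $\ket{u_j}$ is a generic entangled 2-qubit state on the hub and the ``first petal qubit.'' Splitting by whether each ancilla equals $\ket{0}$ (active petal) or $\ket{1}$ (inactive petal, whose clause is trivially satisfied and whose first petal qubit remains free), the same accounting as in the sunflower gives
\[
R_{(a,b,c)} \;=\; \sum_{a'=0}^{a}\sum_{b'=0}^{b}\sum_{c'=0}^{c}\binom{a}{a'}\binom{b}{b'}\binom{c}{c'}\,2^{a-a'+b-b'+c-c'}\,R_0(a',b',c'),
\]
where $R_0(a',b',c')$ is the generic rank of the ``fully active'' subsystem: the central 3-edge on $\{x,y,z\}$ together with $a'$, $b'$, $c'$ generic entangled 2-qubit clauses attaching fresh leaves to the three hubs.

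The core computation is $R_0$. I would apply ILOs, as in the proof of Theorem~\ref{thm:k2}, to turn every 2-edge into a singlet; the central clause is transformed into another generic 3-qubit projector, whose forbidden vector I continue to call $\ket{w}$. The satisfying subspace of the singlets alone is then the product of symmetric subspaces $\calS_{a'+1}\otimes\calS_{b'+1}\otimes\calS_{c'+1}$, of dimension $(a'+2)(b'+2)(c'+2)$, and $R_0$ equals the kernel dimension of the contraction map
\[
\Phi_w:\;\calS_{a'+1}\otimes\calS_{b'+1}\otimes\calS_{c'+1}\;\longrightarrow\;\calS_{a'}\otimes\calS_{b'}\otimes\calS_{c'},\qquad \ket{\psi}\mapsto(\bra{w}_{xyz}\otimes I)\ket{\psi}.
\]
The main obstacle and only non-routine step is showing that $\Phi_w$ is surjective for generic $\ket{w}$, which would give $R_0(a',b',c')=(a'+2)(b'+2)(c'+2)-(a'+1)(b'+1)(c'+1)$. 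Since surjectivity is a Zariski-open condition on $\ket{w}$, it suffices to exhibit one $\ket{w}$ for which it holds: taking $\ket{w}=\ket{000}$ factorizes $\Phi_w=\bra{0}_x\otimes\bra{0}_y\otimes\bra{0}_z$, and a direct check on the Hamming-weight basis $\{\ket{S^{n+1}_j}\}$ shows each factor $\bra{0}:\calS_{n+1}\to\calS_n$ is surjective.

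To close, substitute the formula for $R_0$ into the triple sum above. Because $R_0$ is a difference of two separately multiplicative terms, the sum factorizes into a product of univariate sums; differentiating $(x+2)^a$ at $x=1$ yields
\[
\sum_{a'=0}^{a}\binom{a}{a'}\,2^{a-a'}(a'+2)=3^{a-1}(a+6), \qquad \sum_{a'=0}^{a}\binom{a}{a'}\,2^{a-a'}(a'+1)=3^{a-1}(a+3),
\]
from which the claimed closed form $3^{a+b+c-3}\bigl[(a+6)(b+6)(c+6)-(a+3)(b+3)(c+3)\bigr]$ follows.
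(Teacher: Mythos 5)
Your proposal is correct, and the reduction to $R_0(a',b',c')$ (which is the paper's $R_{[p,q,r]}$) via active/inactive petals, as well as the final summation and its factorization into univariate sums, match the paper exactly. The interesting divergence is in how you evaluate $R_0$: the paper applies ILOs to bring the central clause into the canonical GHZ form $\tfrac{1}{\sqrt2}(\ket{000}-\ket{111})$ alongside singlets on the hanging edges, observes that the resulting projectors form a \emph{simplified stoquastic 3-SAT} instance in the sense of Bravyi--Terhal, and then invokes their theorem that the number of zero-energy states equals the number of connected components of an explicit graph on $\{0,1\}^n$, which it counts via the cube-diagonal picture. You instead keep the central clause as a generic $\ket{w}$, identify the satisfying subspace of the singlets as $\calS_{a'+1}\otimes\calS_{b'+1}\otimes\calS_{c'+1}$, reduce $R_0$ to the corank of the contraction map $\Phi_w$ into $\calS_{a'}\otimes\calS_{b'}\otimes\calS_{c'}$, and establish generic surjectivity by the Zariski-openness of that condition together with the separable witness $\ket{w}=\ket{000}$. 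Your route is more elementary in that it avoids the stoquastic machinery and is a clean linear-algebra/algebraic-geometry argument; it is in fact close in spirit to the paper's own later lemma on $k$-uniform nosegays, which also uses separable clause vectors, but the paper stops there at an upper bound whereas you correctly upgrade it to the exact generic rank via semicontinuity of rank. One point worth spelling out: the normal-form $\ket{w}$ produced by ILO reduction always lies in the GHZ orbit, while your witness $\ket{000}$ does not; you should note that the GHZ orbit is Zariski-dense (the nonvanishing locus of the 3-tangle/hyperdeterminant), so the Zariski-open set where $\Phi_w$ is surjective meets the orbit in a dense subset, and hence the generic normal-form $\ket{w}$ does give maximal rank. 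It would also be tidy to justify explicitly that $\Phi_w$ lands in $\calS_{a'}\otimes\calS_{b'}\otimes\calS_{c'}$, which follows because contracting the hub of a fully symmetric state against any single-qubit functional leaves a state symmetric under the remaining leaves.
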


\begin{proof}
Let us label the qubits of the central triangle by $1$, $2$ and $3$. These qubits have $a,b$ and $c$ hanging triangles
attached to them respectively. Each triangle represents a generic forbidden $3$-qubit state.

Let us also define the $[a,b,c]$-nosegay: it coincides with the $(a,b,c)$-nosegay except that
each hanging triangle is replaced by a hanging edge. Accordingly, the $[a,b,c]$-nosegay has
$n=3+a+b+c$ qubits. Each hanging edge represents a generic forbidden $2$-qubit state while the central
triangle represents a generic forbidden $3$-qubit state.

Let $R_{(a,b,c)}$ and $R_{[a,b,c]}$ be the generic ranks of the $(a,b,c)$-nosegay and the $[a,b,c]$-nosegay.
Repeating the arguments used to compute the genetic rank of sunflowers in Lemma~\ref{lem:sunflower}, we get
\be
\label{simplified}
R_{(a,b,c)}=\sum_{p=0}^a \sum_{q=0}^b \sum_{r=0}^c 2^{a+b+c-p-q-r}\, { a \choose p} \,
{ b \choose q}\, { c \choose r}  R_{[p,q,r]}.
\ee
In the rest of the section we prove that
\be
\label{Rabc}
R_{[a,b,c]}=(a+2)(b+2)(c+2)-(a+1)(b+1)(c+1)
\ee
which after simple algebra yields
\be
R_{(a,b,c)}=3^{a+b+c-3}\left[
(a+6)(b+6)(c+6)-(a+3)(b+3)(c+3)\right].
\ee
We shall start from using the symmetry of the $[a,b,c]$-nosegay to bring the forbidden states into a canonical form
such that the forbidden state associated with the central triangle is
\be
\label{can1}
\ket{v_{1,2,3}}=\frac{1}{\sqrt{2}}\, ( \ket{0,0,0} - \ket{1,1,1})
\ee
while the forbidden state associated with any hanging edge $(i,j)$ is the singlet,
\be
\label{can2}
\ket{v_{ij}}=\frac1{\sqrt{2}}\, ( \ket{0,1} - \ket{1,0}).
\ee
We claim that any set of generic forbidden states can be mapped to the ones defined in~\eqref{can1}, \eqref{can2} by applying invertible
local operators (ILO) to every qubit. Indeed, it was shown by D\"ur, Vidal, and Cirac~\cite{DVC:2000}
that a generic $3$-qubit state is ILO-equivalent to the GHZ state which is in turn ILO-equivalent to
the state  $\ket{v_{1,2,3}}$ defined in~\eqref{can1}.
Note that applying an ILO at this step maps the forbidden states on the hanging edges to some entangled
$2$-qubit states. We can convert them to singlets by applying proper ILO to the free end of every hanging edge.
As we argued in Theorem~\ref{thm:k2}, the dimension of the satisfying subspace is invariant under ILO, so it suffices
to compute $R_{[a,b,c]}$ with forbidden states defined by~\eqref{can1} and \eqref{can2}.

Our arguments will rely on the fact that the canonical forbidden states define an instance of {\em stoquastic}
3-SAT studied in~\cite{BT:2008}. An instance of stoquastic 3-SAT
is defined by a family of projectors $\phi=(\Pi_1,\ldots,\Pi_m)$ which have real non-negative matrix elements in the computational basis, such that every projector acts on at most $3$ qubits. A state $\ket{\psi}$ is a satisfying assignment for $\phi$
iff $\Pi_a\, \ket{\psi} =\ket{\psi}$ for every $a=1,\ldots,m$. In our case the instance $\phi$ is defined by
a family of projectors
\be
\Pi_{1,2,3}= \one-\outer{v_{1,2,3}}{v_{1,2,3}}, \quad \Pi_{i,j}= \one-\outer{v_{ij}}{v_{ij}}
\ee
where $\Pi_{1,2,3}$ acts on the central triangle and $\Pi_{i,j}$ acts on every hanging edge $(i,j)$.
A direct inspection shows that matrix elements of the above projectors have the following properties:
\begin{itemize}
\item Any off-diagonal matrix element belongs to the set $\{0,1/2\}$
\item Any diagonal matrix element belongs to the set $\{1,1/2\}$
\end{itemize}
A family of projectors with such properties defines an instance $\phi$ of {\em simplified stoquastic 3-SAT}~\cite[Section~6.3]{BT:2008}. In particular, it was shown in~\cite{BT:2008} that the number of satisfying assignments of $\phi$ is equal to the number of connected components of a graph
$G=(V,E)$, where $V=\{0,1\}^n$ and $(x,y)\in E$ iff there exists a projector $\Pi_a\in \phi$ such that
$\innerm{x}{\Pi_a}{y}=1/2$. (A satisfying assignment associated with a connected component $V_\alpha \subseteq V$
is the uniform superposition of all vertices in $V_\alpha$, see~\cite{BT:2008} for details.)
It remains to count connected components in the graph $G$ associated with the $[a,b,c]$-nosegay.

Let us partition the qubits of the nosegay into $3$ disjoint subsets $A,B,C$ such that
qubit $1$ and the free ends of all hanging edges attached to it form $A$,
qubit $2$ and the free ends of all hanging edges attached to it form $B$,
and the remaining qubits form $C$. By definition,
\be
|A|=n_a=a+1, \quad |B|=n_b=b+1, \quad |C|=n_c=c+1.
\ee
Using the projectors that live on the hanging edges we conclude  that
$(x,y)\in E$ whenever the restrictions of $x$ and $y$ onto any of the subsets $A$, $B$, $C$ have the
same Hamming weight. Leaving out the projector $\Pi_{1,2,3}$ temporarily we can thus label the connected
components of $G$ by triples of integers
\be
(\alpha,\beta,\gamma), \quad 0\le \alpha \le n_a, \quad
 0\le \beta \le n_b, \quad
  0\le \gamma \le n_c,
\ee
where for any vertex $x\in V$ we define $\alpha,\beta$ and $\gamma$ as the Hamming weight of $x|_A$, $x|_B$, and $x|_C$  respectively.
Adding the projector $\Pi_{1,2,3}$ adds extra edges to the graph $G$, which glue together
some components according to the following rules:
\be
(\alpha,\beta,\gamma)\sim (\alpha+1,\beta+1,\gamma+1), \quad (\alpha,\beta,\gamma)\sim (\alpha-1,\beta-1,\gamma-1).
\ee
Thus the number of connected components of $G$ is the same as the number of diagonals
parallel to the axis $(1,1,1)$ in a cube of size $[0,n_a]\times[0,n_b]\times [0,n_c]$.  This yields~\eqref{Rabc} and completes the proof.
\end{proof}

\paragraph{$k$-uniform nosegays} A $k$-uniform nosegay is determined by a vector $\vec{d} = (d_1, \ldots, d_k)$ of nonnegative integers.  It is the $k$-uniform hypergraph given by a single edge $A = \{ \alpha_1, \ldots, \alpha_k\}$, the $i$th vertex of which is incident upon $d_i$ other ``hanging'' edges $B_i^1, \ldots, B_i^{d_i}$ with $B_i^j = \{ \alpha_i \} \cup \{ \beta_i^j(\ell) \mid 1 \leq \ell \leq k-1\}$. These ``hanging'' edges intersect the central edge at a unique vertex, and are otherwise all disjoint. We refer to such a graph as a $\vec{d}$-nosegay.  In the next lemma we determine the rank of its satisfying space when adorned with separable clause vectors, which is an upper bound on its rank in the generic case.

\begin{lemma}
Let $N(\vec{d})$ denote the rank of the satisfying subspace of the $\vec{d}$-nosegay when adorned with separable clause vectors in general position. Then
$$
N(\vec{d}) = \prod_i (2^{k-1} - 1)^{d_i-1} \left[ \prod_i \left(d_i + 2\left({2^{k-1}} -1\right)\right) - \prod_i \left(d_i + (2^{k-1}-1)\right) \right]\,.
$$
\end{lemma}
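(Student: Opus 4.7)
The plan is to generalize the argument of Lemma~\ref{lem:nosegay}, exploiting separability throughout rather than relying on ILO-reduction of the central clause. First, since each hanging edge $B_i^j$ has $k-1$ free vertices appearing in no other edge, and its clause vector factors as $\ket{v_{B_i^j}} = \ket{\phi_{i,j,0}} \otimes \ket{\phi_{i,j,1}} \otimes \cdots \otimes \ket{\phi_{i,j,k-1}}$ with $\ket{\phi_{i,j,0}}$ on $\alpha_i$, I apply single-qubit ILOs at each free vertex that send $\ket{\phi_{i,j,\ell}}$ to $\ket{0}$ for $\ell \ge 1$. The satisfying-subspace dimension is ILO-invariant, as used in Theorem~\ref{thm:k2}, so it suffices to analyze canonical hanging clauses $\ket{\phi_{i,j}} \otimes \ket{0}^{\otimes (k-1)}$ together with the untouched separable central clause $\ket{v_A} = \ket{\chi_1} \otimes \cdots \otimes \ket{\chi_k}$.

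Next, I expand a candidate state as $\ket{\psi} = \sum_{\vec c} \ket{\psi_{\vec c}} \otimes \ket{\vec c}$, where $\vec c$ ranges over computational-basis strings on the ancilla qubits and $\ket{\psi_{\vec c}} \in \C_2^{\otimes k}$ lives on the central qubits $\alpha_1,\dots,\alpha_k$. Each hanging-edge constraint decouples across $\vec c$: if any ancilla of $B_i^j$ under $\vec c$ is $\ket{1}$ the clause is automatic, else it imposes the single-qubit orthogonality $\inner{\phi_{i,j}}{\psi_{\vec c}} = 0$ at $\alpha_i$. Call $B_i^j$ \emph{active} for $\vec c$ when all its ancillas are $\ket{0}$. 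Grouping the sum by the vector of active-set sizes $(a_1,\dots,a_k)$, the number of $\vec c$ realizing those sizes is $\prod_i \binom{d_i}{a_i}(2^{k-1}-1)^{d_i - a_i}$, since inactive edges contribute $2^{k-1}-1$ ancilla configurations each.

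For each such $\vec c$, admissible $\ket{\psi_{\vec c}}$ lie in a product subspace $V_1 \otimes \cdots \otimes V_k$, where $V_i \subseteq \C_2$ is the joint orthogonal complement of $a_i$ generic single-qubit vectors, and must in addition be perpendicular to $\ket{v_A}$. By genericity of the separable clause vectors, if some $a_i \ge 2$ then two generic vectors in $\C_2$ are linearly independent so $V_i = 0$ and the term vanishes; if all $a_i \le 1$ then $\dim V_i = 2 - a_i$, and the projection of $\ket{v_A}$ onto $\bigotimes_i V_i$ factorizes as $\bigotimes_i \Pi_{V_i}\ket{\chi_i}$, which is nonzero on a Zariski-dense open set because each $\inner{\chi_i}{V_i}$ is. The central clause therefore removes exactly one further dimension, giving a per-configuration contribution $\prod_i (2 - a_i) - 1$.

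Combining these, and writing $s := 2^{k-1} - 1$,
\[
N(\vec d) = \sum_{a_1,\dots,a_k \in \{0,1\}} \prod_i \binom{d_i}{a_i} s^{d_i - a_i} \Bigl[\prod_j (2 - a_j) - 1\Bigr].
\]
Both summands factorize over $i$: $\sum_{a_i \in \{0,1\}} \binom{d_i}{a_i} s^{d_i - a_i}(2 - a_i) = s^{d_i - 1}(2s + d_i)$ and $\sum_{a_i \in \{0,1\}} \binom{d_i}{a_i} s^{d_i - a_i} = s^{d_i - 1}(s + d_i)$. Taking the difference and substituting $s = 2^{k-1}-1$ yields the claimed formula. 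The main obstacle is the genericity bookkeeping in the third step: one must verify that for separable clause vectors in general position the hanging-edge constraints at each $\alpha_i$ are ``fully generic'' single-qubit orthogonality constraints (so that they collapse $V_i$ exactly as counted), and that the central clause is independent of the product constraint they induce. Both are standard consequences of the fact that a nonzero polynomial in the clause-vector parameters vanishes only on a Zariski-closed set of positive codimension.
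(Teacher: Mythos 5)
Your proposal is correct and follows essentially the same route as the paper's proof: normalize each hanging clause to the form $\ket{\phi_{i,j}}\otimes\ket{0}^{\otimes(k-1)}$, expand over ancilla basis strings, observe that genericity forces at most one active constraint per central vertex and that the central clause generically removes exactly one dimension from $\bigotimes_i V_i$, and sum $\prod_i\binom{d_i}{a_i}(2^{k-1}-1)^{d_i-a_i}\bigl[\prod_i(2-a_i)-1\bigr]$ over $a_i\in\{0,1\}$. Your explicit factorization of that sum into the closed form is a welcome detail the paper only asserts.
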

\begin{proof}
As the clause vectors are separable, we may introduce a basis for the Hilbert spaces (copies of $\C^2$) associated with the vertices $\beta_i^j(\ell)$ so that the clause vector associated with $B_i^j$ has the form $\ket{a_i^j} \otimes \ket{0}^{k-1}$. (Here the first factor in this tensor product is associated with the vertex $\alpha_i$.) As with the sunflower, we may expand a satisfying vector $\ket{v}$ according to this basis:
$$
\ket{v} = \sum_{\vec{b}} \ket{\vec{b}} \otimes \ket{v_\vec{b}}\,,
$$
where $\ket{\vec{b}} = \ket{b_1} \otimes \ket{b_2} \otimes \cdots$ is a basis vector in the tensor product of the Hilbert spaces associated with the vertices $b_i^j(\ell)$ and $v_\vec{b}$ lies in the Hilbert space $\mathcal{H}_A$ associated with the center edge.

We write $\vec{b} \vdash B_i^j$ when $b_t = 1$ for one of the indices associated with $B_i^j$. Should $\vec{b} \not\vdash B_i^j$, observe that we may expand $v_\vec{b}$ in a Schmidt decomposition $\sum_{s=1}^2 \ket{v_s} \otimes \ket{w_s}$, where the $\ket{v_s}$ lie in the Hilbert space associated with $\alpha_i$, and, considering that the $\ket{w_s}$ are orthogonal and that $\ket{v}$ satisfies the clause $B_i^j$, conclude that each $\langle {v_s}, a_i^j \rangle = 0$. It follows that $\ket{v_\vec{b}}$ has the form $\ket{a_i^j}^\perp \otimes \ket{w}$ (where $\ket{a_i^j}^\perp$ is orthogonal to $\ket{a_i^j}$). As the $\ket{a_i^j}$ are in general position, then, for each $i$ we must have $|\{ B_i^j \mid \vec{b} \vdash B_i^j\}| \leq 1$. Additionally, observe that if, for each $i$, we have $\vec{b} \vdash B_i^j$ for some $j$ then $v_\vec{b}$ is completely determined (and, in fact, separable), and cannot satisfy the clause $A$. In general, writing $k - \ell = |\{ B_i^j \mid \vec{b} \vdash B_i^j \}|$, we find that orthogonality with the vector associated with $A$ precisely constrains $v_\vec{b}$ to a subspace of $\mathcal{H}_A$ of dimension $2^\ell - 1$. Evidently, this expresses the satisfying subspace as an orthogonal direct sum of subspaces of total dimension
$$
\sum_{I \subset \{ 1, \ldots, k\}} \left( \prod_{i \in I} d_i (2^{k-1} - 1)^{d_i-1} \right) \cdot \left(\prod_{i \not\in I} \left(2^{k-1} - 1\right)^{d_i}\right) \cdot \left(2^{k - |I|} -1\right)\,,
$$
equal to the expression in the statement of the lemma.
\end{proof}

\section{Upper bound on the critical density}
\label{sec:critical-upper}

In this section we present two upper bounds on the critical threshold.  The first one is weaker but simpler.

\begin{theorem}
\label{thm:critical-upper1}
Let $H$ be a random $3$-uniform hypergraph with $n$ vertices and $m=\alpha n$ edges.  If $\alpha > 3.894$, then with high probability the corresponding quantum 3-SAT problem is unsatisfiable.
\end{theorem}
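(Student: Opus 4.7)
The plan is to bound $\rgen$ by iterating the gadget inequality of Lemma~\ref{lem:gadget} across an edge-disjoint decomposition of the random hypergraph $H$ into sunflowers. Starting from the empty hypergraph, which has $\rgen = 2^n$, and adding sunflowers $H_1, \ldots, H_\nu$ one at a time, Lemma~\ref{lem:gadget} combined with the explicit rank $S(d,3) = 3^{d-1}(d+6)$ from Lemma~\ref{lem:sunflower} gives
\[
\rgen(H) \;\le\; 2^{n - \Delta}, \qquad \Delta \;:=\; \sum_{j=1}^{\nu} s(d_j), \qquad s(d) \;:=\; (2d+1) - (d-1)\log_2 3 - \log_2(d+6),
\]
where a direct substitution shows $s(0) = 1 + \log_2 3 - \log_2 6 = 0$, so ``empty'' sunflowers are free. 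Hence $H$ is unsatisfiable whenever $\Delta > n$. The per-edge savings $s(d)/d$ is increasing in $d$ with limit $2 - \log_2 3 \approx 0.415$, so the bound $\alpha_c^q \le 3.894$ will follow if the decomposition I construct achieves average per-edge savings at least $1/\alpha \approx 0.257$.

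I would produce the decomposition by a greedy peeling algorithm on $H = \phi(n, \alpha n)$: repeatedly select a vertex $z$ of high residual degree, form a sunflower centered at $z$ by taking a maximal collection of still-present edges incident to $z$ whose non-$z$ endpoints are pairwise disjoint, remove these edges, and iterate until the hypergraph is empty. Because $H$ is sparse, the ``collision'' event in which two candidate petals share an outer vertex has probability $O(1/n)$ per edge, so only $o(n)$ edges are lost this way; the residual orphan edges are treated as $1$-petal sunflowers contributing $s(1) = 3 - \log_2 7$ each.

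To analyze the algorithm I would apply Wormald's differential-equation method. Parameterize time by $\tau = i/n$, where $i$ is the number of vertices processed, and track the residual degree profile $(f_d(\tau))_{d \ge 0}$; the one-step conditional expectation of the change in $(f_d)$ is a smooth function of the current state, yielding a coupled autonomous ODE system whose solution gives the asymptotic fraction of vertices contributing a sunflower of each size $d$. This determines $\Delta/n \to F(\alpha)$ for a computable $F$, and the theorem reduces to the (numerical) check that $F(\alpha) > 1$ for all $\alpha > 3.894$.

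The main obstacle is the Wormald analysis itself: verifying the standard hypotheses (bounded one-step change, Lipschitz drift, short-term concentration) for this peeling process on a sparse random hypergraph, and choosing the greedy rule simple enough to analyze yet strong enough to reach the quoted constant rather than merely the naive $\alpha \approx 5.19$ obtained from $1$-petal gadgets. A secondary issue is the bookkeeping around collision events (to confirm they contribute only $o(n)$ to $\Delta$), and the minor mismatch between the with-replacement edge model used here and the configuration-model variant in which Wormald's theorem is most cleanly stated.
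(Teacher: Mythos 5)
Your plan is sound and is, in essence, the paper's own proof: an edge-partition of $H$ into sunflowers, the iterated gadget inequality $\rgen \le 2^{n-\sum_j s(d_j)}$ with $S(d,3)=3^{d-1}(d+6)$, a greedy peeling to produce the partition, and a concluding numerical check. Your bookkeeping is correct ($s(0)=0$, $s(d)/d \nearrow 2-\log_2 3$, unsatisfiability iff the $d$-weighted average of $s(d)/d$ exceeds $1/\alpha$), and your treatment of collisions as $o(n)$ orphan one-petal sunflowers is an acceptable substitute for the paper's observation that only $O(\log n)$ pairs of edges share two vertices.

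The one substantive divergence is the peeling rule, and it is exactly the divergence that determines how hard the analysis is. You peel by \emph{high residual degree}, which is the refinement the paper only sketches after Theorem~\ref{thm:critical-upper3} (reporting the better constant $3.689$, via a coupled ODE system for the whole degree profile in the configuration model, \`a la~\cite{ach-moore}); this is the ``main obstacle'' you correctly flag. The paper's Theorem~\ref{thm:critical-upper1} instead peels a \emph{uniformly random} vertex at each step, and this choice dissolves the obstacle: give each vertex an i.i.d.\ priority $t\in[0,1]$ and assign each edge to the sunflower of its highest-priority vertex; a vertex of priority $t$ then acquires a sunflower of degree ${\rm Poisson}(3\alpha t^{2})$, so the limiting fraction $a_d$ of degree-$d$ sunflowers is the explicit integral $\int_0^1 \e^{-3\alpha t^2}(3\alpha t^2)^d/d!\,\dt$ (an incomplete Gamma function), with concentration from a routine Azuma argument rather than Wormald's theorem. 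Two cautions if you pursue your version: (i) the margin at $\alpha=3.894$ is about $10^{-4}$ per variable, so the deferred check that $F(\alpha)>1$ is not a formality --- you need the actual limiting distribution of sunflower degrees, since replacing $\Exp[\ln(d/6+1)]$ by $\ln(\Exp[d]/6+1)$ (as in the general-$k$ bound of Theorem~\ref{thm:general-k}) is \emph{not} strong enough at this density; and (ii) you must truncate the degree sum at some $\dmax$ and argue the tail is negligible, as the paper does. With the random-vertex rule both issues reduce to evaluating $\sum_{d\le\dmax} a_d\, s(d)$ in closed form, which is how the paper lands on $3.894$.
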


\begin{proof}
Let $H$ be a $k$-uniform hypergraph with $n$ vertices and $m$ edges, corresponding to a formula with $m$ clauses.  If we can partition $H$ into a set of sunflowers, where there are $n_d$ sunflowers of each degree $d$, then by Lemmas~\ref{lem:gadget} and~\ref{lem:sunflower}, the generic rank of its satisfying subspace is bounded by
\begin{align}
\rgen
&\le 2^n \prod_{d=1}^\infty \left( \frac{S(d,k)}{2^{1+d(k-1)}} \right)^{n_d} \nonumber \\
&= 2^n \prod_{d=1}^\infty \left( \left(1-\frac{1}{2^{k-1}}\right)^d \left( \frac{d}{2^k-2} + 1 \right) \right)^{n_d}
\label{eq:sunflower-bound} \\
&= 2^n  \prod_{d=1}^\infty \left( \left( \frac{3}{4} \right)^d \left( \frac{d}{6} + 1\right) \right)^{n_d} \, .
\nonumber
\end{align}
where in the last line we set $k=3$.

Clearly~\eqref{eq:sunflower-bound} is minimized if we have a small number of sunflowers of high degree.  Ideally, we would like to characterize the best possible such partition.  For now, we content ourselves with the partition resulting from the following simple algorithm: at each step, choose a random vertex, declare it and its edges to be a sunflower, and remove them from the graph.

We can carry out this algorithm in continuous time, by assigning each vertex a uniformly random index $t \in [0,1]$ and removing
vertices in the order of decreasing $t$.  In that case, by the time we remove a vertex $v$ with degree $t$, its sunflower includes those edges whose other vertices all have index less than $t$.  Since this is true of each of its clauses independently with probability $t^{k-1}$, and since the original degree
 distribution of $v$  is Poisson with mean $k\alpha$, the degree of $v$'s sunflower
at the moment when it is removed is Poisson-distributed with mean $k \alpha t^{k-1}$.  Integrating over $t$, the expected number of vertices whose sunflowers have degree $d$ is
 $na_d$, where
\begin{align}
a_d
&= \int_0^1 \frac{\e^{-k \alpha t^{k-1}} (k \alpha t^{k-1})^d}{d!} \,\dt \nonumber \\
&= \int_0^1 \frac{\e^{-3 \alpha t^2} (3 \alpha t^2)^d}{d!} \,\dt \nonumber \\
&= \frac{\Gamma(d+1/2)-\Gamma(d+1/2,3\alpha)}{2 \sqrt{3\alpha} \,d!} \, ,
\label{eq:sunflower-int}
\end{align}
where $\Gamma(a,z) = \int_z^\infty x^{a-1} \e^{-x} \,\dx$ is the incomplete Gamma function.

We then upper bound $\rgen$ by cutting off the product above $\dmax = 100$, ignoring the effect of the tiny fraction of sunflowers of greater degree.  Standard Azuma-type inequalities tell us that, with high probability, the number of sunflowers of degree $d$ is $a_d n + o(n)$ for all $d \le \dmax$.  Also, with high probability there are less than $\log n$ pairs of edges which share more than one vertex.  Their neighborhoods consist of sunflowers with two petals stuck together.  We claim that such a sunflower has lower rank than a normal one, but in any case pretending that these petals are not stuck together only changes the rank by a constant factor, and the effect of $\log n$ such steps changes the rank by $\poly(n)$.  So, with high probability,
\[
\rgen \le \poly(n) \left[ 2 \prod_{d=0}^\dmax \left( \left( \frac{3}{4} \right)^d \left( \frac{d}{6} + 1 \right) \right)^{a_d} \right]^n \, ,
\]
and therefore
\[
\lim_{n \to \infty} \frac{1}{n} \ln \rgen
\le \ln 2 + \sum_{d=0}^\dmax a_d \left( d \ln \frac{3}{4} + \ln \left(\frac{d}{6} + 1\right) \right) \, .
\]
If we set $\alpha = 3.894$, we find that this limit is $-1.372 \times 10^{-4}$, so $\rgen$ is exponentially small.  Thus with high probability in $H$, $\rgen = 0$ with probability $1$ in the clause vectors, and the formula is unsatisfiable.
\end{proof}

Next, we improve this result by partitioning the graph into nosegays instead of sunflowers.  Although the analysis is slightly harder, the algorithm is equally simple.

\begin{theorem}
\label{thm:critical-upper3}
Let $H$ be a random $3$-uniform hypergraph with $n$ vertices and $m=\alpha n$ edges.  If $\alpha > 3.594$, then with high probability the corresponding quantum 3-SAT problem is unsatisfiable.
\end{theorem}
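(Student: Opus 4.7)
The plan is to mirror Theorem~\ref{thm:critical-upper1} with nosegays in place of sunflowers. First I would describe a simple greedy algorithm: assign each edge an independent uniform index $t_e \in [0,1]$ and process edges in decreasing order; when edge $e$ is processed, if it has not already been consumed as a hanging edge of an earlier central, declare $e$ central and consume every remaining adjacent edge as a hanging edge, attached to whichever vertex of $e$ it shares. As in Theorem~\ref{thm:critical-upper1}, with high probability only $O(\log n)$ pairs of edges share more than one vertex, so only a negligible fraction of nosegays violate the ``disjoint hanging edges'' structural requirement and the combined effect on $\rgen$ is at most $\poly(n)$.

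Next I would analyze the algorithm in continuous time via the fact that, in the $n\to\infty$ limit, the local neighborhood of a random edge is a Poisson tree: each vertex of $e$ is incident to $\mathrm{Poisson}(3\alpha)$ other edges, each with its own such neighborhood. Let $\pi(t)$ denote the probability that an edge with $t_e=t$ is central. Since $e$ is central iff no adjacent edge $f$ with $t_f>t$ is central, and $e$ has roughly $\mathrm{Poisson}(9\alpha)$ adjacent edges with iid uniform indices, one obtains the self-consistency equation
\[
\pi(t) = \exp\!\left(-9\alpha \int_t^1 \pi(s)\,ds\right),
\]
solved by $\pi(t) = (1+9\alpha(1-t))^{-1}$. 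Conditional on $e$ being central at time $t$, the number of hanging edges at each of $e$'s three vertices is approximately $\mathrm{Poisson}(\mu(t))$, independently across the three vertices, where
\[
\mu(t) \;=\; 3\alpha\,t\,\exp\!\left(-6\alpha \int_t^1 \pi(s)\,ds\right) \;=\; \frac{3\alpha\,t}{(1+9\alpha(1-t))^{2/3}};
\]
the factor $t$ is the probability that a given neighbor of $e$ has index below $t$, and the exponential factor accounts for the requirement that no competing central (sitting at $f$'s other two vertices) consumed $f$ first.

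Iterating Lemma~\ref{lem:gadget} over the nosegay partition and substituting Lemma~\ref{lem:nosegay}, I would obtain
\[
\frac{1}{n}\ln \rgen \;\le\; \ln 2 \;+\; \alpha \int_0^1 \pi(t)\, \Exp\!\left[\ln \frac{R_{(a,b,c)}}{2^{3+2(a+b+c)}}\right] dt \;+\; o(1),
\]
with $(a,b,c)$ iid $\mathrm{Poisson}(\mu(t))$, and then verify numerically---by truncating the sum over $(a,b,c)$ at some $\dmax$ exactly as in Theorem~\ref{thm:critical-upper1}---that this bound is strictly negative at $\alpha=3.594$, so that $\rgen$ is exponentially small and the formula is unsatisfiable with high probability.

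The main obstacle is the recursive character of the continuous-time analysis: unlike the sunflower algorithm, where a vertex's sunflower is determined by the clean condition ``all neighbors have smaller index,'' here the centrality of an edge depends on whether some adjacent edge has been made central, which itself depends on \emph{its} neighbors. Establishing the Poisson-tree fixed point for $\pi$ and $\mu$ rigorously, justifying the approximate independence of the three hanging counts at a central edge, and converting the expectation bound above into an almost-sure statement via the standard Azuma-type concentration used in Theorem~\ref{thm:critical-upper1} together constitute the bulk of the work.
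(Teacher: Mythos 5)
Your combinatorial process is the right one, and it is in fact the same process the paper analyzes: picking a uniformly random remaining edge, making it central, and deleting it together with its neighbors is exactly your ``process edges in a uniformly random order, an edge is central iff not yet consumed.'' The framework (iterate Lemma~\ref{lem:gadget} over the partition, plug in Lemma~\ref{lem:nosegay}, verify numerically that the exponential rate is negative) also matches. The gap is in your quantitative analysis of the partition. Your self-consistency equation $\pi(t)=\exp(-9\alpha\int_t^1\pi(s)\,{\rm d}s)$ is the fixed point for greedy MIS on a \emph{Poisson random graph} of mean degree $9\alpha$, where the centrality events of the neighbors of $e$ are asymptotically independent. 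But the adjacency structure among edges of a random $3$-uniform hypergraph is not of that type: all edges through a common vertex form a clique, so the neighbors of $e$ at a given vertex are mutually adjacent, at most one of them can ever be central, and their centrality indicators are strongly negatively correlated. (Equivalently: the central edges form a greedy maximal \emph{matching} of the hypergraph, not a greedy MIS of a locally tree-like graph.) The error is consequential: your $\pi$ gives $\alpha\int_0^1\pi(t)\,{\rm d}t=\tfrac19\ln(1+9\alpha)\approx 0.390$ central edges per vertex at $\alpha=3.594$, whereas the correct density is $(1-\nu_0)/3\approx 0.263$ with $\nu_0=(6\alpha+1)^{-1/2}$, so your Poisson parameters for $(a,b,c)$ and hence the final integral are wrong.

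The paper avoids this entirely by exploiting a Markov property you did not use: after deleting a nosegay (its edges and the three central vertices), the remaining hypergraph, conditioned on its number of edges, is again uniformly random. Wormald's differential-equation method then tracks only the scalar edge density, giving
\[
\frac{\dmu}{\dnu}=\frac13+\frac{3\mu}{\nu},\qquad \mu(\nu)=\frac{\nu}{6}\bigl((6\alpha+1)\nu^2-1\bigr),
\]
with $\nu$ the fraction of surviving vertices; the hanging-edge counts at the three central vertices are then independent ${\rm Poisson}(3\mu/\nu)$ at the instant of removal, and the bound becomes $\ln 2+\frac13\int_{\nu_0}^1\Exp_{a,b,c}[\ln(R_{(a,b,c)}/2^{3+2(a+b+c)})]\,\dnu$. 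If you want to keep a continuous-time/local-limit formulation, you would need the fixed point for random greedy matching on the hypergraph (respecting the clique structure), not the naive MIS recursion; as written, your analysis does not establish the $3.594$ bound.
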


\begin{proof}
At each step we choose a uniformly random edge, declare it and the edges it shares a vertex with to be a nosegay, and remove them and its vertices from the hypergraph.  The remaining hypergraph has $3$ fewer vertices.  Moreover, if we condition on how many edges it has, it is uniformly random in the model where edges are chosen with replacement.  This allows us to model this process with differential equations~\cite{wormald}.

If $t$ is the number of steps we have taken so far, then there are $n-3t$ remaining vertices.  Let $m$ denote the number of remaining edges.  Its expected change on each step is
\[
\Exp\!\left[ \Delta m \right] = -1 - \frac{9m}{n-3t} \, .
\]
Now we write $m=\mu n$ and $t=\tau n$, and rescale this to give a differential equation:
\begin{equation}
\frac{\dmu}{\dtau} = -1-\frac{9 \mu}{1-3\tau} \, .
\end{equation}
Changing variables to the fraction $\nu = 1-3\tau$ of vertices remaining, this is
\begin{equation}
\label{nosegay-diffeq}
\frac{\dmu}{\dnu} = \frac{1}{3} + \frac{3 \mu}{\nu} \, .
\end{equation}
With the initial condition $\mu(1) = \alpha$, the solution to this is
\begin{equation}
\label{eq:mu}
\mu(\nu) = \frac{\nu}{6} \left( (6 \alpha + 1) \nu^2 - 1 \right) \, .
\end{equation}
This becomes zero when $\nu_0$ vertices are left, where
\[
\nu_0 = \frac{1}{\sqrt{6 \alpha+1}} \, .
\]
At that point, there are no edges left, and the algorithm stops.

With high probability, for any $\nu > \nu_0$, the number of edges remaining when there are $\nu n$ vertices left is $m(\nu) = \mu(\nu) n + o(n)$.  Similar to the proof of Theorem~\ref{thm:critical-upper1}, summing over the $(1-\nu_0)n/3 + o(n)$ steps of the algorithm then gives
\[
\lim_{n \to \infty} \frac{1}{n} \ln \rgen
\le \ln 2 + \frac{1}{3} \int_{\nu_0}^1 \Exp_{a,b,c}\!\left[ \ln \frac{R_{(a,b,c)}}{2^{3+2(a+b+c)}} \right] \dnu
\]
where $R_{(a,b,c)}$ is given by Lemma~\ref{lem:nosegay}, and where $a$, $b$, and $c$ are chosen according to independent Poisson distributions with mean $3 \mu / \nu$.

If we set $\alpha = 3.594$ and upper bound the expectation over $a$, $b$, and $c$ by ignoring terms where any of them is greater than $50$, then evaluating the resulting integral numerically we find that limit is $-1.601 \times 10^{-4}$.  Again $\rgen$ is exponentially small, so these formulas are unsatisfiable with high probability.
\end{proof}

There are a number of potential ways to improve this result.  First, we can achieve a better partition of the graph into gadgets by prioritizing high-degree vertices.  Analogous to~\cite{ach-moore}, we can analyze the resulting partition using a system of coupled differential equations, using the configuration model to keep track of the random graph conditioned on its degree distribution.  For the sunflower, this gives a bound of $3.689$---a significant improvement over Theorem~\ref{thm:critical-upper1}, but not as good as Theorem~\ref{thm:critical-upper3}.  We have not attempted a partition into nosegays that prioritizes high-degree clauses.

Second,we have no obligation to consider partitions into sunflowers or nosegays that can be found in polynomial time.  We could also use non-algorithmic proofs that a desirable partition exists.  These algorithms simply happen to be both efficient and easy to analyze.

Thirdly, we could use notions of local maximality which have been successful in the classical case, but it is not obvious how to apply these in the quantum setting.  When is an entangled satisfying state locally maximal?

\section{A upper bound for general $k$}
\label{sec:general-k}

In this section we use our sunflowers to prove an upper bound on the quantum $k$-SAT threshold for general $k$.  We have made no attempt to optimize this bound beyond the simplest possible argument, but it establishes that the quantum threshold is strictly less than the classical one for all $k \ge 6$.

\begin{theorem}
\label{thm:general-k}
Let $b \approx 0.573$ be the unique positive root of the equation $\ln 2 - 2b + \ln(b+1) = 0$.  Then for all $k \ge 3$, if $\alpha \ge 2^k b$ then with high probability the corresponding quantum $k$-SAT problem is unsatisfiable.
\end{theorem}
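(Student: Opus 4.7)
The plan is to apply Lemmas~\ref{lem:gadget} and~\ref{lem:sunflower} to a particularly simple partition of the edges of $H$ into sunflowers: assign each edge independently to a uniformly random one of its $k$ incident vertices, and let $S_v$ be the sunflower at $v$ consisting of the edges assigned to $v$. This makes $\sum_v d_v = m = \alpha n$ exactly, where $d_v := |S_v|$. In a random $k$-uniform hypergraph with $m = \alpha n$ edges, the expected number of pairs of edges sharing two or more vertices is $O(\alpha^2 k^4)$, independent of $n$; aside from these defective cases the sunflowers have pairwise-disjoint petals, and the defective edges can be extracted as singleton sunflowers, contributing only $O(1)$ to $\ln\rgen$.

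Applying Lemma~\ref{lem:gadget} iteratively, together with the explicit formula of Lemma~\ref{lem:sunflower}, yields
\[
\rgen \;\le\; e^{O(1)}\,2^n\prod_v\left(1-2^{-(k-1)}\right)^{d_v}\!\left(\frac{d_v}{2^k-2}+1\right).
\]
Taking $(1/n)\ln$ and applying Jensen's inequality to the concave function $\ln(x/(2^k-2)+1)$ (using the deterministic identity $(1/n)\sum_v d_v = \alpha$) gives
\[
\frac{1}{n}\ln\rgen \;\le\; \ln 2 \;+\; \alpha\ln\!\left(1-2^{-(k-1)}\right) \;+\; \ln\!\left(\frac{\alpha}{2^k-2}+1\right) \;+\; o(1).
\]

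Substituting $\alpha = 2^k b$ and using $\ln(1-x)\le -x$ together with $\lim_{k\to\infty} 2^k b/(2^k-2)=b$, the right-hand side converges to $\ln 2 - 2b + \ln(b+1)$ as $k\to\infty$. Defining $b^\ast\approx 0.573$ as the unique positive root of $\ln 2 - 2b + \ln(b+1)=0$ proves the theorem in the asymptotic regime. For finite $k\ge 3$ I would verify the bound directly: setting $b=b^\ast$ and writing $x:=2^{-(k-1)}\in(0,1/4]$, the bound takes the form $h(x):=\ln 2 + \ln(b^\ast+1-x) + (2b^\ast-x)\ln(1-x)/x$. A Taylor expansion gives $h(0)=0$ and $h'(0)=-(b^\ast)^2/(b^\ast+1)<0$, so $h<0$ near zero; checking $h'<0$ on $(0,1/4]$ (equivalently, numerical evaluation at $x = 1/4, 1/8, 1/16, \ldots$) then extends this to the full range, ensuring the bound is strictly negative for every $k\ge 3$.

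The main obstacle is precisely this uniform-in-$k$ verification. The $k\to\infty$ limit produces the equation defining $b^\ast$ essentially for free, but the two $k$-dependent corrections---the $-x$-type error in $\ln(1-x)$ and the excess $2^k b/(2^k-2) - b$ in the logarithm---pull in opposite directions, so one must confirm that the negative correction wins uniformly in $k$. This is a one-variable calculus exercise rather than a conceptual difficulty, but it is the one spot in the proof requiring explicit numerical or analytic computation rather than a symbolic appeal to the preceding lemmas.
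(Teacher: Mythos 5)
Your proposal follows essentially the same route as the paper: partition the edges into one sunflower per vertex, apply Lemmas~\ref{lem:gadget} and~\ref{lem:sunflower}, use concavity of the logarithm together with $\sum_v d_v = \alpha n$ to reduce to $\ln 2 + \alpha\ln(1-2^{1-k}) + \ln(\alpha/(2^k-2)+1)$, and then verify negativity at $\alpha = 2^k b$ by expanding around $2^{-k}=0$. The calculus step you defer is carried out in the paper as an explicit inequality chain (using $\ln(1-y) < -y - y^2/2$ and concavity of $\ln$), yielding the bound $2^{1-k}\bigl(1 - b - \tfrac{1}{b+1} + 2^{-k}\bigr) < 0$ for $k\ge 3$, which is precisely your $h'(0) = -b^2/(b+1) < 0$ observation with the remainder controlled uniformly in $k$.
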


\begin{proof}
First we rewrite~\eqref{eq:sunflower-bound} as follows:
\[
\rgen
\le 2^n \left(1-\frac{1}{2^{k-1}}\right)^{\!m} \prod_{d=1}^\infty  \left( \frac{d}{2^k-2} + 1 \right)^{n_d} \, ,
\]
since $\sum_d n_d d = m$.  Treating the product as a harmonic mean over the vertices, bounding it as an arithmetic mean, and using the fact that the mean degree of a sunflower is $\Exp[d]=\alpha$ then gives
\begin{align}
\frac{1}{n} \ln \rgen
&\le \ln 2 + \alpha \ln \left(1-\frac{1}{2^{k-1}}\right) + \Exp\!\left[ \ln \left( \frac{d}{2^k-2} + 1 \right) \right]
\nonumber \\
&\le \ln 2 + \alpha \ln \left(1-\frac{1}{2^{k-1}}\right) + \ln \left( \frac{\alpha}{2^k-2} + 1 \right) \, .
\label{eq:general-k-exact}
\end{align}
Rearranging, applying the Taylor series of $\ln (1-x)$, and setting $\alpha = 2^k b$, we have
\begin{align*}
\frac{1}{n} \ln \rgen
&\le \ln 2 + (\alpha-1) \ln \left( 1-2^{1-k} \right) + \ln \left( 2^{-k} \alpha + 1 - 2^{1-k} \right) \\
&< \ln 2 - (\alpha-1) (2^{1-k} + 2^{1-2k}) + \ln \left( 2^{-k} \alpha + 1\right) - \frac{2^{1-k}}{2^{-k} \alpha + 1} \\
&= \ln 2 - 2 b + \ln \left( b + 1\right) + 2^{1-k} \left( 1 - b - \frac{1}{b + 1} \right) \\
&= 2^{1-k} \left( 1 - b - \frac{1}{b + 1} + 2^{-k} \right) \\
&< 0 \mbox{ for all $k \ge 3$} \, ,
\end{align*}
since $1-b-1/(b+1) < -1/8$.
\end{proof}

In contrast, the classical $k$-SAT threshold is known to be $(1+o(1)) 2^k \ln 2$~\cite{ach-peres}.  Since $b < \ln 2$, it follows that the quantum threshold is less than the classical one for sufficiently large $k$.  In fact, explicit lower bounds on the classical threshold for finite $k$ from~\cite{ach-peres} are greater than the upper bounds on the quantum threshold obtained by setting~\eqref{eq:general-k-exact} to zero for $k \ge 6$.  Using the approach of Theorem~\ref{thm:critical-upper1} improves this to $k \ge 5$.



\section{Open questions}
\label{sec:open}

We close with several open questions.
\begin{itemize}

\item What is the computational complexity of determining the generic rank of a hypergraph?  It would be surprising if it were not at least \numP-hard, but it is not obvious that it is in $\numP$.  On the other hand, we are not aware of any proof that it is even \NP-hard.

\item Can we prove a lower bound on the satisfiability threshold which is greater than the density at which a random graph contains a non-vanishing 2-core?  In particular, is there a phase where random formulas are satisfiable, but all satisfying states are entangled?

\item Assuming that the quantum $k$-SAT threshold exists, is $\alpha_c^q$ proportional to $2^k$?  If so, what is $b = \lim_k \alpha_c^q / 2^k$?  The value of $b$ given in Theorem~\ref{thm:general-k} is almost certainly an overestimate.  Note that our lower bounds based on the existence of the 2-core as given in~\cite{laumann} actually decrease as $k$ increases---for instance, for $k=4$, $5$, and $6$ we have the lower bounds 0.772, 0.701, and 0.637.  So, at present, we do not even know that $\alpha_c^q$ grows without bound.
\end{itemize}

\section*{Acknowledgments}

S.B. received
support from the DARPA QUEST program under contract
no. HR0011-09-C-0047 and  is grateful to CWI for hospitality while this work
was being done.  
C.M. and A.R. are supported by the NSF under grant CCF-0829931, and by the DTO under contract W911NF-04-R-0009.


\end{document}